\documentclass[doublecolumn]{IEEEtran}
\usepackage{calc}
\usepackage[T1]{fontenc}

\usepackage[overload]{empheq}
\usepackage{cleveref}
\usepackage{multirow}
\usepackage{cite}
\usepackage{graphicx,subfigure}
\usepackage{psfrag}
\usepackage{amsmath,amssymb}
\usepackage{color}
\usepackage{tikz}
\usepackage{cleveref}
\usepackage{empheq}
\usepackage[nopar]{lipsum}

\interdisplaylinepenalty=2500

\newenvironment{proof}{\begin{IEEEproof}}{\end{IEEEproof}}








\newtheorem{proposition}{Proposition}
 









\newcommand{\bit}{\begin{itemize}}
\newcommand{\eit}{\end{itemize}}

\newcommand{\bc}{\begin{center}}
\newcommand{\ec}{\end{center}}

\newcommand{\ba}{\begin{array}}
\newcommand{\ea}{\end{array}}

\newcommand{\beq}{\begin{equation}}
\newcommand{\eeq}{\end{equation}}

\newcommand{\beqn}{\begin{equation*}}
\newcommand{\eeqn}{\end{equation*}}

\newcommand{\bean}{\begin{eqnarray*}}
\newcommand{\eean}{\end{eqnarray*}}
\newcommand{\bea}{\begin{eqnarray}}
\newcommand{\eea}{\end{eqnarray}}

\def\F{\mathbb{F}}



\def\cv{\boldsymbol{c}}

\def\sv{\boldsymbol{s}}

\def\wv{\boldsymbol{w}}
\def\xv{\boldsymbol{x}}
\def\yv{\boldsymbol{y}}






\pagestyle{empty}
\makeatletter
\def\blfootnote{\gdef\@thefnmark{}\@footnotetext}
\makeatother

\begin{document}
\sloppy

\title{On Model Coding for Distributed Inference and Transmission in Mobile Edge Computing Systems}
\author{
  Jingjing Zhang and Osvaldo Simeone
	}

\maketitle

\thispagestyle{empty}

\begin{abstract}
Consider a mobile edge computing system in which users wish to obtain the result of a linear inference operation\blfootnote{The authors are with the Department of Informatics at King's College London, UK (emails: jingjing.1.zhang@kcl.ac.uk, osvaldo.simeone@kcl.ac.uk). The authors have received funding from the European Research Council (ERC) under the European Union's Horizon 2020 Research and Innovation Programme (Grant Agreement No. 725731).} on locally measured input data. Unlike the offloaded input data, the model weight matrix is distributed across wireless Edge Nodes (ENs). ENs have non-deterministic computing times, and they can transmit any shared computed output back to the users cooperatively. This letter investigates the potential advantages obtained by coding model information prior to ENs' storage. Through an information-theoretic analysis, it is concluded that, while generally limiting cooperation opportunities, coding is instrumental in reducing the overall computation-plus-communication latency.
\end{abstract}

\section{Introduction} 
Introduced by the European Telecommunications Standards Institute (ETSI), the concept of mobile edge computing is by now established as a pillar of the 5G network architecture as an enabler of computation-intensive applications on mobile devices \cite{TSMFDS:17}. As illustrated in Fig.~\ref{fig:model}, with mobile edge computing, users offload local data to edge servers connected to wireless Edge Nodes (ENs). These in turn carry out the necessary computations and return the desired output to the users on the wireless downlink. Most academic work on mobile edge computing has focused on the complex resource allocation problem of orchestrating computing and communication resources at the mobiles and at the ENs (see, e.g., \cite{SSB:15} and references therein).

Papers in the line of work introduced above either assume generic applications characterized by given input-output rate requirements (e.g., \cite{SSB:15}) or optimize the partition of the computing graph of the applications between local and edge computing. Moreover, this body of research has shown the importance of jointly designing the physical-layer transmission strategy and the computing schedule. Importantly, computing the same output at multiple ENs, while generally increasing the computation time, enables cooperation opportunities in the downlink transmission from the ENs to the users \cite{SSB:15}.

More recently, in a parallel development in the information-theoretic literature, it has been demonstrated that, if the computation of interest has specific properties, coding of either inputs or outputs can help decrease the overall latency. In particular, 
reference \cite{LLPPR:18} demonstrated the advantages of Maximum Distance Separable (MDS) coding of input matrices in reducing the latency for distributed matrix-vector multiplication in master-worker systems. The impact of coding computational outputs was instead investigated in \cite{LMA:27} for Map-Reduce computing tasks.  


In this letter, we investigate the role of coding in the mobile edge computing system illustrated in Fig.~\ref{fig:model}. 
In the system, each user wishes to compute a linear inference $\mathbf{W}\xv$ on a local data vector $\xv$ given a network-side model matrix $\mathbf{W}$ via offloading. The matrix $\mathbf{W}$ is generally large and hence it requires splitting across the servers of multiple ENs. 
Linear operations are practically important, e.g., for the implementation of recommendation systems based on collaborative filtering \cite{SFHS:07} or similarity search based on the cosine distance \cite{RYR:07}. In both cases, the user-side data is a vector $\xv$ that embeds the user profile \cite{SFHS:07} or a query \cite{RYR:07}, and the goal is to search through the matrix of all items on the basis of the inner products between the corresponding row of matrix $\mathbf{W}$ and the user-data $\xv$. This letter presents an information-theoretic framework that enables the potential advantages of model coding and associated performance trade-offs to be quantified.


\begin{figure}[t!] 
  \centering
\includegraphics[width=1\columnwidth]{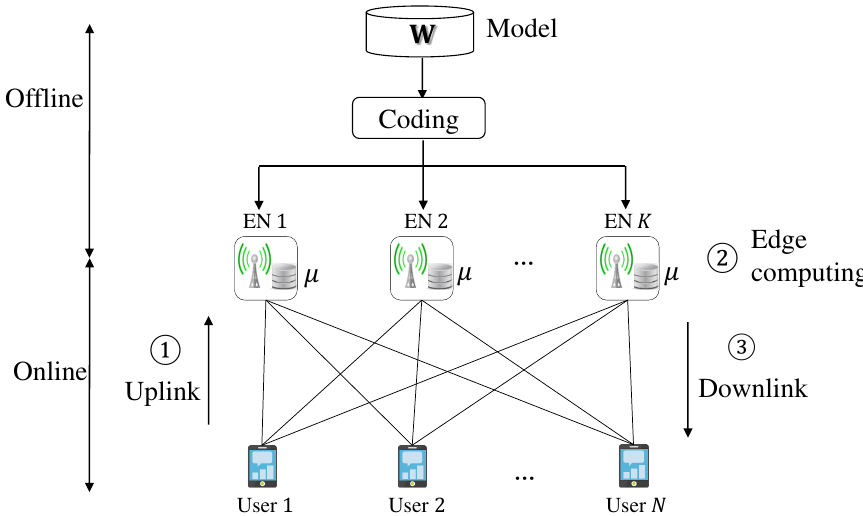}
\vspace{-21pt}
\caption{Illustration of the distributed edge computing system under study.}
\label{fig:model}
\vspace{-15pt}
\end{figure}

\section{System Model and Performance Criteria}

\subsection{System Model}

We consider the distributed edge computing model illustrated in Fig.~\ref{fig:model}, where $N$ users are connected to $K$ ENs through a shared wireless channel. For a given input vector $\xv\in \F_{2^L}^{r \times 1}$ of $rL$ bits provided by a user, the system aims at computing the linear inference operation $\yv=\mathbf{W}\xv$, where the weight, or model, matrix $\mathbf{W}\in \F_{2^L}^{m \times r}$ is static for a sufficiently long period of time. Each EN $k$ can store a number of bits equivalent to a fraction $\mu\in[1/K,1]$ of rows of matrix $\mathbf{W}$, i.e., $m \mu rL$ bits. Storage of information from matrix $\mathbf{W}$ takes place offline given the static nature of the model. 


Each user $n$, with $n\in[N]$, has its own personal data $\xv_n$, with $\xv_n\in \F_{2^L}^{r \times 1}$ of $rL$ bits, which is collected online by the user, and it wishes to obtain the result of the linear operation $\yv_n=\mathbf{W}\xv_n$. The task is offloaded to the ENs as shown in Fig.~\ref{fig:model}. To this end, the ENs acquire the user data $\mathbf{X}=[\xv_1,\cdots,\xv_N]$ through uplink transmission. Second, the ENs carry out computations on the received users' data and on the stored data about $\mathbf{W}$. Finally, via downlink communication, the ENs deliver the results of the computations to the users, so that each user $n$ can recover the required output $\yv_n$. 

In this letter, we make the simplifying assumption that the time needed to upload $\mathbf{X}$ to all ENs is fixed and each EN gets the entire matrix $\mathbf{X}$. This allows us to focus on the challenging problem of jointly designing offline model coding and storage at the ENs, as well as online edge computing and downlink transmission phases. The problem is formulated as follows. 

\emph{Model Coding and Storage:} In an \emph{offline} phase, the model matrix $\mathbf{W}$ is linearly encoded \cite{ZS:18} as 
$[\cv_1^T, \cdots,\cv_{m'}^T]^T=\mathbf{G}\mathbf{W}, $
where we have defined the \emph{coding matrix} $\mathbf{G}\in \F_{2^L}^{m' \times m}$, with integer $m'\geq m$. Each EN $k$ stores the subset $\mathcal{C}_k$, with $\mathcal{C}_k\subseteq \mathcal{C}$ of $|\mathcal{C}_k|\leq m\mu$ coded rows.

\emph{Edge Computing:} In the \emph{online} phase, each EN $k$ computes inner products between all users' data received in the uplink and the available coded model rows in set $\mathcal{C}_k$. As in \cite{OUG:18}, the order in which such computations are carried out is specified by vector $\mathbf{s}^T_k=[\sv_{1,k},\cdots,\sv_{m\mu,k}]$, where each element $s_{i,k}\in \mathbb{F}_{2^L}^{1 \times r}$, with $i\in[m\mu]$, is selected from the set $\mathcal{C}_k$ of coded rows available at EN $k$. In particular, each EN $k$ starts to compute the inner product $\sv_{1,k}\mathbf{X}$ and continues computing $\sv_{i,k} \mathbf{X}\in \F_{2^L}^{1 \times N}$, for $i=2,3,\cdots,m\mu$. As in the literature on distributed computing, we refer to each computation $\sv_{i,k} \mathbf{X}$ as an Intermediate Value (IV) \cite{LMA:16}. A computation policy is hence defined by the coding matrix $\mathbf{G}$, \emph{scheduling matrix} $\mathbf{S}\in \F_{2^{rL}}^{m\mu\times K}$, with the $k$th column vector given as $\mathbf{s}_k$, as well as by a \emph{stopping criterion}, which is used by the ENs to decide when to stop the computing phase and start downlink transmission. 

To formulate the stopping criterion, we define $\mathbf{m}(t)=[m_1(t),\cdots,m_K(t)]$ as the vector that indicates how many IVs have been computed by the ENs by time $t$, with $t=0$ indicating the start of the computing phase and $m_k(t)$ denoting the number of computations at each EN $k$. Note that we have the inequalities $0\leq m_k(t)\leq m\mu$ due to the storage constraint. We also define as 
\begin{align} \label{im}
\mathcal{I}_k(m_k,\mathbf{s}_k)=\{\sv_{i,k} \mathbf{X}: i\in[m_k] \}, 
\end{align} 
the set of first $m_k$ IVs computed by EN $k$ for a given choice of the scheduling vector $\mathbf{s}_k$. A computation vector $\mathbf{m}$ is said to be feasible if the union $\bigcup_{k\in[K]}\mathcal{I}_k(m_k,\mathbf{s}_k)$ of all computed IVs across all $K$ ENs contains enough information to enable the recovery of all the outputs $\{\yv_n\}_{n=1}^{N}$, i.e., if the conditional entropy $H\big(\{\yv_n\}_{n=1}^{N}|\bigcup_{k\in[K]}\mathcal{I}_k(m_k,\mathbf{s}_k)\big)$ equals zero. Note that, if $\mathbf{m}$ is feasible, then any $\mathbf{m'}\geq \mathbf{m}$, where inequality is element-wise, is also feasible. 

A stopping criterion for a given computation policy is defined by a set $\mathcal{M}$ of feasible computation vectors in the sense that the ENs stop computing at the first time $T_C$ such that $\mathbf{m}(T_C)$ is in set $\mathcal{M}$ , i.e., 
\begin{align} \label{def:time}
T_C=\min\{t:\mathbf{m}(t)\in \mathcal{M}\}.
\end{align} 
As a result, the computed IVs at EN $k$ by the end of the edge computing phase are given as $\mathcal{I}_k=\mathcal{I}_k(m_k(T_C),\mathbf{s}_k)$. As a simple example, a computation policy may require that all ENs complete all local computations, i.e., $\mathcal{M}=\{[m\mu,m\mu,\cdots,m\mu]\}$.

\emph{Downlink Communication:} In this phase, the ENs send the computed IVs to the users on the downlink so that each user $n$ can recover the desired output $\yv_n$.
To this end, the ENs apply conventional one-shot linear precoding as in \cite{ZS:17,NMA:17}. Accordingly, in each downlink transmission block, the transmitted signal at each EN $k\in [K]$ is given as $u_k = a_k s_{k}$, where $s_{k}$ is a symbol that encodes a subset of IVs in set $\mathcal{I}_k$, and $a_{k}$ is the corresponding beamforming coefficients. All the ENs that have computed the same IVs can transmit them cooperatively via joint beamforming \cite{ZS:17,NMA:17}. We impose the per-EN power constraint $\mathbb{E} \left[|u_k|^2\right] \leq P$. In each downlink block, the signal received by each user $n$ is given as
\begin{equation} \label{trans}
    v_n=\sum_{k=1}^{K} h_{nk} u_k + z_n,
\end{equation}
where $h_{nk} \in \mathbb{C}$ is the channel coefficient from EN $k$ to user $n$; $u_k \in \mathbb{C}$ is  the defined signal transmitted by EN $k$; $z_n$ is unit-power additive complex Gaussian noise. The fading channels are drawn from a continuous distribution, constant in each block, and known to all ENs.
\vspace{-5pt}

\subsection{Performance Analysis}

As in \cite{MCJ:18}, we assume that the computing time needed by each EN $k$ to perform $m_k$ computations is given as 
\begin{align} \label{eq:time}
t_k=\lambda_k +\tau m_k,
\end{align} 
where  $\lambda_k \sim \text{exp}(\eta)$, independent across ENs, is an exponential random variable with average $1/\eta$ that models the time needed for setup at each EN $k$; and $\tau$ is the (deterministic) time required for each computation. Under model \eqref{eq:time}, given a stopping set $\mathcal{M}$, the random duration $T_C$ in \eqref{def:time} of the computation phase can be written as the optimization
\begin{align} \label{def:time2}
T_C=\max_{k\in[K]}\big(\lambda_k+\tau m^*_k(\pmb{\lambda})\big),
\end{align} 
where we have defined the stopping vector $\mathbf{m}^*(\pmb{\lambda})=[m^*_1(\pmb{\lambda}),\cdots,m^*_K(\pmb{\lambda})]$ for a given vector $\pmb{\lambda}=[\lambda_1,\cdots,\lambda_K]$ as 
\begin{align} \label{def:mstar}
\mathbf{m}^*(\pmb{\lambda})=\text{arg} \min_{ \mathbf{m} \in \mathcal{M}} \max_{k\in[K]}(\lambda_k+\tau m_k).
\end{align}
This follows since the time needed to realize a computation vector $\mathbf{m}$ is given by $\max_{k\in[K]}(\lambda_k+\tau m_k)$. 


In the high-SNR regime of interest, we evaluate the downlink phase duration $T_D$ by normalizing for the time $NL/\log(P)$ needed to deliver one IV, of size $NL$ bits, to all $N$ users, in the absence of mutual interference. Hence, the normalized communication delay $\delta_D$ is given as 
\begin{equation}  \label{eq:sd}
\delta_D = \lim_{P \rightarrow \infty}\frac{T_D}{NL/\log(P)}.
\end{equation}
For comparison, we also normalize the computation time $T_C$ by the time $\tau$ to compute one IV for all users, obtaining the normalized computation delay $\delta_C=T_C/\tau$. Finally, the average total normalized latency $\delta$ of the edge computing system is given as 
\begin{align} \label{def:total}
\delta=\textrm{E}[\delta_C]+\gamma\textrm{E}[\delta_D],
\end{align}
where parameter $\gamma$ is the ratio between the average time (in seconds) needed to compute one IV at an EN and the average time needed to transmit one IV on an interference-free channel.

\section{Uncoded vs. Coded Computing}

\subsection{Uncoded Storage and Computing (UC)} \label{uc}
Consider first a standard uncoded strategy whereby each EN stores $m\mu$ rows directly from the model matrix rows $\{\wv_i\}_{i=1}^{m}$. Following, e.g., \cite{OUG:18}, the scheduling matrix $\mathbf{S}$ is designed in a cyclic manner, so that each vector $\wv_i$ is repeated $K\mu$ times across all ENs. 
As an example, if $m=6$, $\mu=1/2$ and $K=3$, then the scheduling vector are $\mathbf{s}_1=[\wv_1, \wv_4,\wv_5]$, $\mathbf{s}_2=[\wv_2, \wv_5,\wv_6]$, and $\mathbf{s}_3=[\wv_3, \wv_6,\wv_4]$. 
The stopping set $\mathcal{M}$ is defined as the set of all feasible computation vectors, so that every vector $\mathbf{m}\in \mathcal{M}$ ensures that each IV $\wv_i\mathbf{X}$ has been computed by some EN. 

For each IV $\wv_i\mathbf{X}$ and a given feasible vector $\mathbf{m}\in \mathcal{M}$, we define as $r_i(\mathbf{m})$ the number of times that the IV has been computed across the ENs, i.e., the number of ENs whose set $\mathcal{I}_k$ contains the IV. We hence have the constraint $\sum_{i=1}^{m}r_i(\mathbf{m})=\sum_{k=1}^{K} {m_k}$. To deliver a single IV computed at $r_i(\mathbf{m})$ ENs, cooperative Zero-Forcing (ZF) precoding allows $\min\{r_i(\mathbf{m}),N\}$ users to be served at the same time at the maximum high-SNR rate $\log(P)$, where $\min\{a,b\}$ represents the minimum between the two arguments $a$ and $b$. This is done by choosing the precoding matrix across the $\min\{r_i(\mathbf{m}),N\}$ transmitting ENs to equal the inverse of the (square) channel matrix, upon appropriate power scaling. Hence, the normalized downlink latency \eqref{eq:sd} for this IV is given as $1/\min\{r_i(\mathbf{m}),N\}$ \cite{ZS:17,NMA:17}. As a result, the total latency can be characterized as follows.

\begin{proposition}
With the described uncoded strategy, the average total normalized latency \eqref{def:total} is given as 
\begin{align}\label{latencyuc} 
\!\!\delta_{UC}\!=\!\textrm{E}\Bigg[\!\frac{\max_{k\in[K]}\!\!\big(\lambda_k\!+\!\tau m^*_k(\pmb{\lambda})\big)}{\tau}\!+\!\!\!\sum_{i\in[m]}\!\!\frac{\gamma}{\min\{r_i(\mathbf{m^*(\pmb{\lambda})}),N\}}\!\Bigg],   
\end{align}
where the stopping vector $\mathbf{m}^*(\pmb{\lambda})$ is given in \eqref{def:mstar}, and the expectation is taken over the distribution of the random vector $\pmb{\lambda}$. 
\end{proposition}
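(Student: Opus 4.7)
The plan is to apply the decomposition \eqref{def:total} by separately characterizing the computation and downlink phases of the uncoded scheme, and then invoking linearity of expectation. The computation term is essentially a restatement of the definitions: since $\delta_C=T_C/\tau$ and $T_C$ is specified by \eqref{def:time2}, one immediately obtains $\delta_C=\max_{k\in[K]}(\lambda_k+\tau m^*_k(\pmb{\lambda}))/\tau$, which reproduces the first summand inside the expectation in \eqref{latencyuc}.

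The substantive step is identifying the downlink delay $\delta_D$. Under the cyclic storage outlined in Section~\ref{uc}, any feasible stopping vector $\mathbf{m}\in\mathcal{M}$ ensures that every IV $\wv_i\mathbf{X}$, for $i\in[m]$, is available at a subset of exactly $r_i(\mathbf{m})$ ENs, so that the constraint $\sum_{i=1}^m r_i(\mathbf{m})=\sum_{k=1}^K m_k$ holds by construction. I would transmit the $m$ IVs sequentially on the downlink, so that $T_D=\sum_{i\in[m]} T_D^{(i)}$, where $T_D^{(i)}$ is the time to deliver IV $i$ to all $N$ users. Since IV $i$ is available at $r_i(\mathbf{m}^*(\pmb{\lambda}))$ ENs, cooperative ZF precoding \cite{ZS:17,NMA:17} based on the pertinent channel submatrix serves $\min\{r_i,N\}$ users per downlink block at per-user high-SNR rate $\log P$. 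Because each user's share of IV $i$ is a single $L$-bit symbol, the delivery time satisfies $T_D^{(i)}=NL/(\min\{r_i,N\}\log P)$ to leading order in $P$, so normalization by $NL/\log P$ per \eqref{eq:sd} yields a per-IV normalized delay of $1/\min\{r_i(\mathbf{m}^*(\pmb{\lambda})),N\}$. Summing over $i\in[m]$ produces $\delta_D$.

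Combining the two terms through \eqref{def:total} and pulling the common expectation outside then yields \eqref{latencyuc}. The main subtlety, which I expect to be the trickiest aspect of the argument, is the degrees-of-freedom-style per-IV delay claim: the exact downlink time involves a ceiling $\lceil N/\min\{r_i,N\}\rceil$ on the number of precoding rounds, together with an $O(1)$ per-user rate gap relative to $\log P$, but both contributions are absorbed in the high-SNR normalization \eqref{eq:sd}. Almost-sure invertibility of the cooperating $\min\{r_i,N\}\times r_i$ channel submatrix, which is required for ZF to achieve $\min\{r_i,N\}$ spatial degrees of freedom in every fading realization, is guaranteed by the assumed continuous fading distribution.
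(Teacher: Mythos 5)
Your proposal is correct and follows essentially the same route as the paper, which justifies this proposition directly in the text preceding it: the computation term is read off from \eqref{def:time2} and $\delta_C=T_C/\tau$, and the downlink term comes from serving $\min\{r_i(\mathbf{m}^*(\pmb{\lambda})),N\}$ users per block via cooperative ZF at rate $\log P$, giving normalized per-IV delay $1/\min\{r_i(\mathbf{m}^*(\pmb{\lambda})),N\}$ and summing over $i\in[m]$. Your added remarks on the ceiling in the number of precoding rounds and the almost-sure invertibility of the channel submatrix are details the paper leaves implicit but do not change the argument.
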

\vspace{-5pt}

\subsection{MDS coded Storage and Computing (MC)}
We proceed to consider an MDS-coded scheme that aims at enhancing robustness to straggling ENs \cite{LMA:16,MCJ:18,ZS:18}. In this scheme, the coding matrix $\mathbf{G}$ is selected as the generator matrix of an $(K\mu m,m)$ MDS code; each EN $k$ stores $m\mu$ distinct coded rows; and the computing order at each EN is arbitrary. Furthermore, the stopping set $\mathcal{M}$ is defined such that, given the fractional cache size $\mu$, the system waits for the fastest $\lceil 1/\mu \rceil$ ENs to finish all their computations. By definition of an $(K\mu m,m)$ MDS code, this guarantees that all the $m$ required output elements in $\{\yv_n\}_{n=1}^{N}$ can be obtained from the $m$ IVs computed at the $[1/\mu]$ ENs by treating the missing IVs from the slower $K-\lceil 1/\mu \rceil$ ENs as erasures. 


With this scheme, there is no redundancy in the set of IVs computed at the ENs and hence no cooperation opportunities are available for downlink transmission. It follows that the $m$ IVs need to be sent sequentially to each user in the downlink using orthogonal transmission, and thus the communication latency is given as $\delta_D=m$.

\begin{proposition} \label{pro:mds}
With the described MDS coded scheme, the average total latency \eqref{def:total} is given as 
\begin{align}  \label{latencymc}
\delta_{MC}=\frac{(H_K-H_{K-\lceil 1/\mu \rceil})}{\eta\tau}+m(\mu+\gamma). 
\end{align}
\end{proposition}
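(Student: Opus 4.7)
The plan is to decompose the total normalized latency as $\delta_{MC}=\mathrm{E}[\delta_C]+\gamma\,\mathrm{E}[\delta_D]$ and evaluate each term separately using the structure of the MDS scheme.

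I would first handle the downlink contribution. By construction, the scheme stops as soon as $\lceil 1/\mu\rceil$ ENs have each finished all $m\mu$ of their local computations, producing exactly $\lceil 1/\mu\rceil\cdot m\mu\geq m$ distinct coded IVs, of which $m$ suffice for MDS decoding of $\{\bfy_n\}_{n=1}^N$. Since the coded rows held at each EN are distinct, no IV is available at more than one EN, and joint beamforming across ENs on the same IV as in \eqref{trans} is impossible. Hence, each of the $m$ selected IVs must be transmitted orthogonally to the $N$ users, so by the normalization in \eqref{eq:sd} the downlink delay is deterministically $\delta_D=m$, contributing $\gamma m$ to \eqref{def:total}.

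Next, I would compute $\mathrm{E}[\delta_C]$. Under \eqref{eq:time}, EN $k$ finishes all its $m\mu$ computations at time $t_k=\lambda_k+\tau m\mu$. Since the stopping set $\mathcal{M}$ consists of all computation vectors with at least $\lceil 1/\mu\rceil$ coordinates equal to $m\mu$, the optimization in \eqref{def:mstar} is solved by selecting the $\lceil 1/\mu\rceil$ ENs with the smallest $\lambda_k$ and setting the remaining entries of $\mathbf{m}^*(\boldsymbol{\lambda})$ to zero. Consequently,
\begin{equation*}
T_C=\tau m\mu+\lambda_{(\lceil 1/\mu\rceil)},
\end{equation*}
where $\lambda_{(j)}$ denotes the $j$-th smallest of the i.i.d.\ exponentials $\lambda_1,\dots,\lambda_K$. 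Dividing by $\tau$ gives $\delta_C=m\mu+\lambda_{(\lceil 1/\mu\rceil)}/\tau$.

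Finally, I would invoke the standard order-statistics identity for i.i.d.\ exponentials with rate $\eta$. Writing $\lambda_{(j)}$ as a telescoping sum of independent memoryless spacings $E_i/(K-i+1)$ with $E_i\sim\mathrm{exp}(\eta)$, one obtains
\begin{equation*}
\mathrm{E}[\lambda_{(j)}]=\frac{1}{\eta}\sum_{i=1}^{j}\frac{1}{K-i+1}=\frac{H_K-H_{K-j}}{\eta}.
\end{equation*}
Setting $j=\lceil 1/\mu\rceil$ and combining with the downlink term yields \eqref{latencymc}. There is no real obstacle here; the only delicate point is verifying that the stopping rule indeed lets the $m$ IVs produced by the fastest $\lceil 1/\mu\rceil$ ENs be decodable via the MDS property, which follows immediately from the definition of an $(K\mu m,m)$ MDS code once any $K-\lceil 1/\mu\rceil$ ENs are treated as erasures.
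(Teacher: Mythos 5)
Your proposal is correct and follows essentially the same route as the paper: it splits the latency into $\mathrm{E}[\delta_C]+\gamma\,\mathrm{E}[\delta_D]$, sets $\delta_D=m$ from the lack of cooperation among distinct coded rows, and identifies $T_C=\tau m\mu+\lambda_{(\lceil 1/\mu\rceil)}$ with the $\lceil 1/\mu\rceil$-th order statistic of the exponential setup times. The only difference is that you derive $\mathrm{E}[\lambda_{(j)}]=(H_K-H_{K-j})/\eta$ explicitly via the memoryless spacings, whereas the paper simply cites this order-statistics fact.
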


\begin{proof}
Since only the fastest $\lceil 1/\mu \rceil$ ENs are required to execute their full computations, the average computation time is given as $\textrm{E}[T_C]=\textrm{E}[\lambda_{\lceil 1/\mu \rceil:K}]+\tau m\mu=(H_K-H_{K-\lceil 1/\mu \rceil})/\eta+\tau m\mu$, where $\lambda_{\lceil 1/\mu \rceil:K}$ is the $\lceil 1/\mu \rceil_{th}$ order statistics of exponential random variables $\{\lambda_k\}_{k=1}^{K}$, and $H_K=\sum_{k=1}^{K}1/k$ is the $K_{th}$ harmonic number (see \cite{MCJ:18}). 
\end{proof}
\vspace{-5pt}

\subsection{Hybrid Scheme (HS)} 
We now propose a hybrid scheme whose aim is to combine the robustness to stragglers afforded by the MDS-coded scheme and the cooperative downlink transmission advantages of the uncoded scheme. The proposed hybrid scheme allows the reduction in computing time via MDS coding to be traded off for savings in communication time via EN cooperation.
To this end, we concatenate an $(\rho_1m,m)$ MDS code for some $\rho_1\geq 1$ with a repetition code that replicates each coded vector to $\rho_2$ ENs. Controlling the design parameters $(\rho_1,\rho_2)$, the scheme ranges from uncoded storage $(\rho_1=1)$ to MDS coding $(\rho_2=1)$.

More precisely, following \cite{ZS:18}, in order to ensure an even distribution of coded rows, the $\rho_1m$ coded rows $\{\cv_i\}_{i=1}^{\rho_1m}$ are split into $\binom{K}{\rho_2}$ disjoint subsets. Each subset $\mathcal{C}_{\mathcal{K}}$ consists of $b=(\rho_1m)/\binom{K}{\rho_2}$ coded rows, and is indexed by a subset $\mathcal{K}\subseteq[K]$ of size $\rho_2$, i.e., $|\mathcal{K}|=\rho_2$. Each EN $k$ stores all the rows in the set $\bigcup_{\mathcal{K}: k\in\mathcal{K}} \mathcal{C}_{\mathcal{K}}$, with cardinality $b\binom{K-1}{\rho_2-1}=\rho_1\rho_2m/K$. Due to the storage constraint $m\mu$ at each EN, we have the constraint 
\begin{align} \label{condition1}
\rho_1\rho_2\leq K\mu.
\end{align}

We select the stopping set in a manner similar to the MDS coded strategy, so that the computing phase is completed as soon as $q$ ENs complete all their computations, where $q$ is a design parameter. Following \cite[Proposition 1]{ZS:18}, the three design parameters $(q, \rho_1, \rho_2)$ need to satisfy the constraint 
\begin{align} \label{condition2}
\binom{K}{\rho_2}-\binom{K-q}{\rho_2} \geq\frac{1}{\rho_1} \binom{K}{\rho_2}
\end{align}
in order to ensure that $m$ distinct coded IVs are computed across the ENs and hence all desired outputs can be recovered. It can be observed that the choice of parameters $(\rho_1,\rho_2)$ depends on system parameters $K, \mu$ and $\gamma$, which are constant, and design parameter $q$. These parameters are expected to be constant for long periods of time and hence frequent re-encoding is not necessary.


At the end of the computing phase, each computed IV $\cv_i\mathbf{X}$ is available at $r_i$ ENs, where $r_i$ can be shown to lie in the interval $[r_{min}, r_{max}]$, with $r_{min}=\max\{\rho_2-(K-q),1\}$ and $r_{max}=\min\{q,\rho_2\}$ in a manner similar to \cite{ZS:18}. Moreover, for any $r_i\in[r_{min}, r_{max}]$, the number of computed IVs is $B_i=\binom{q}{r_i}\binom{K-q}{\rho_2-r_i}b$ since there are $\binom{q}{r_i}\binom{K-q}{\rho_2-r_i}$ subsets of ENs that have computed the same IVs. For downlink transmission, in order to maximizing cooperative opportunities, the computed IVs are sent in descending order of redundancy $r_i$ by using cooperative ZF precoding to serve $r_i$ users simultaneously. 


\begin{proposition}
With the described hybrid scheme, the average total latency \eqref{def:total} is given as 
\begin{align}  \label{latencyhs} 
\delta_{HS}&=\min_{q}\Bigg[ \frac{(H_K-H_{K-q})}{\eta\tau}+m\mu  \notag \\
& +\gamma \min_{(\rho_1,\rho_2)} \bigg(\sum_{r_i=r_q}^{r_{max}}\frac{B_i}{r_i}+\frac{m-\sum_{r_i=r_q}^{r_{max}} B_i}{r_q-1}\bigg)\Bigg],
\end{align}
where we have defined $r_q= \inf\big\{r:\sum_{r_i=r}^{r_{max}}B_i\leq m\big\}$; and the optimization over parameters $q\in[\lceil 1/\mu \rceil, K]$, $\rho_1 \in[1,(q+1)/q,\cdots,K/q]$, and $\rho_2\in[\lfloor q\mu \rfloor: \lfloor K\mu \rfloor]$ is constrained by Condition \eqref{condition1} and \eqref{condition2}.
\end{proposition}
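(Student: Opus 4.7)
The plan is to decompose the expected total latency \eqref{def:total} into the computation part $E[\delta_C]$ and the communication part $E[\delta_D]$ for a fixed choice of $(q,\rho_1,\rho_2)$, derive a closed-form expression for each, and finally minimize over the discrete feasible design set imposed by \eqref{condition1} and \eqref{condition2}.

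For the computation term, I would follow the same reasoning as in Proposition \ref{pro:mds}. The stopping set demands that the fastest $q$ ENs finish all their local $m\mu$ computations, so by \eqref{eq:time} the computation phase duration is $T_C=\lambda_{q:K}+\tau m\mu$, where $\lambda_{q:K}$ denotes the $q$-th order statistic of the i.i.d.\ exponentials $\{\lambda_k\}$ with rate $\eta$. Using the classical identity $E[\lambda_{q:K}]=(H_K-H_{K-q})/\eta$, normalizing by $\tau$, and averaging over $\pmb{\lambda}$ yields $E[\delta_C]=(H_K-H_{K-q})/(\eta\tau)+m\mu$, which matches the first two terms of \eqref{latencyhs}.

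For the communication term, I would first count, for each redundancy level $r_i\in[r_{min},r_{max}]$, the number $B_i$ of distinct IVs cached at exactly $r_i$ ENs after the computing phase. The $b$ rows of each subset $\mathcal{C}_{\mathcal{K}}$ indexed by $\mathcal{K}\subseteq[K]$ with $|\mathcal{K}|=\rho_2$ are available at exactly those ENs in $\mathcal{K}$ that belong to the fastest-$q$ set. Since the number of index sets $\mathcal{K}$ whose intersection with the fastest-$q$ set has cardinality $r_i$ is $\binom{q}{r_i}\binom{K-q}{\rho_2-r_i}$, one obtains $B_i=\binom{q}{r_i}\binom{K-q}{\rho_2-r_i}b$. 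By the MDS property, delivering any $m$ distinct coded IVs suffices to recover all desired outputs. Recalling from the uncoded analysis that an IV cached at $r$ ENs can be transmitted via cooperative ZF with normalized downlink latency $1/r$ (for $r\leq N$), the schedule that transmits IVs in strictly decreasing order of redundancy minimizes $E[\delta_D]$. Accumulating all IVs with $r_i\geq r_q$ and topping up with the residual $m-\sum_{r_i=r_q}^{r_{max}}B_i$ IVs drawn from redundancy level $r_q-1$ reproduces the bracketed communication sum in \eqref{latencyhs}. Combining with $E[\delta_C]$ and minimizing over $(q,\rho_1,\rho_2)$ in the feasible set then yields the claim.

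The main obstacle will be justifying optimality of the greedy descending-redundancy schedule: one must rule out that some alternative size-$m$ subset of coded rows achieves a smaller weighted sum $\sum 1/r_i$. This reduces to a standard exchange argument, since every size-$m$ subset of distinct coded rows is MDS-decodable and the per-IV cost $1/r_i$ is monotonically decreasing in $r_i$, so swapping any low-redundancy IV for an unused higher-redundancy IV cannot increase the latency. A secondary subtlety is that the formula implicitly assumes $r_q\geq 2$ so that $1/(r_q-1)$ is well defined; in the degenerate case $\sum_{r_i=r_{min}}^{r_{max}} B_i=m$ the top-up term vanishes and does not need to be invoked.
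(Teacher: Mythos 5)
Your proposal is correct and follows essentially the same route as the paper's proof: the computation term is inherited from the MDS analysis with $q$ in place of $\lceil 1/\mu\rceil$, the communication term comes from counting $B_i$ IVs at each redundancy level and serving them with cooperative ZF at cost $1/r_i$ in descending order of redundancy, with the residual sent at level $r_q-1$. Your added exchange argument for the optimality of the greedy schedule and the remark on the degenerate case $r_q=1$ are sound elaborations of steps the paper simply asserts.
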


\begin{proof}
Given any design parameter $q\in[\lceil 1/\mu \rceil, K]$, the average computation time is evaluated as in Proposition~\ref{pro:mds}, with the computing latency given as $(H_K-H_{K-q})/(\eta\tau)+m\mu$ in \eqref{latencymc}. Using downlink transmission, the $B_i$ IVs with redundancy $r_i$ require a communication latency $B_i/r_i$ using cooperative ZF as explained in Section~\ref{uc}. In order to deliver $m$ IVs, the IVs with redundancy $r_i\in[r_q, r_{max}]$ are sent in full, while only $m-\sum_{i=r_q}^{r_{max}}B_i$ IVs with redundancy $r_q-1$ need to be delivered. The corresponding total communication latency is optimized over all design parameters $(q, \rho_1, \rho_2)$ that satisfy Condition \eqref{condition1} and \eqref{condition2}. 
\end{proof}

\begin{figure}[t!] 
  \centering
\includegraphics[width=0.78\columnwidth]{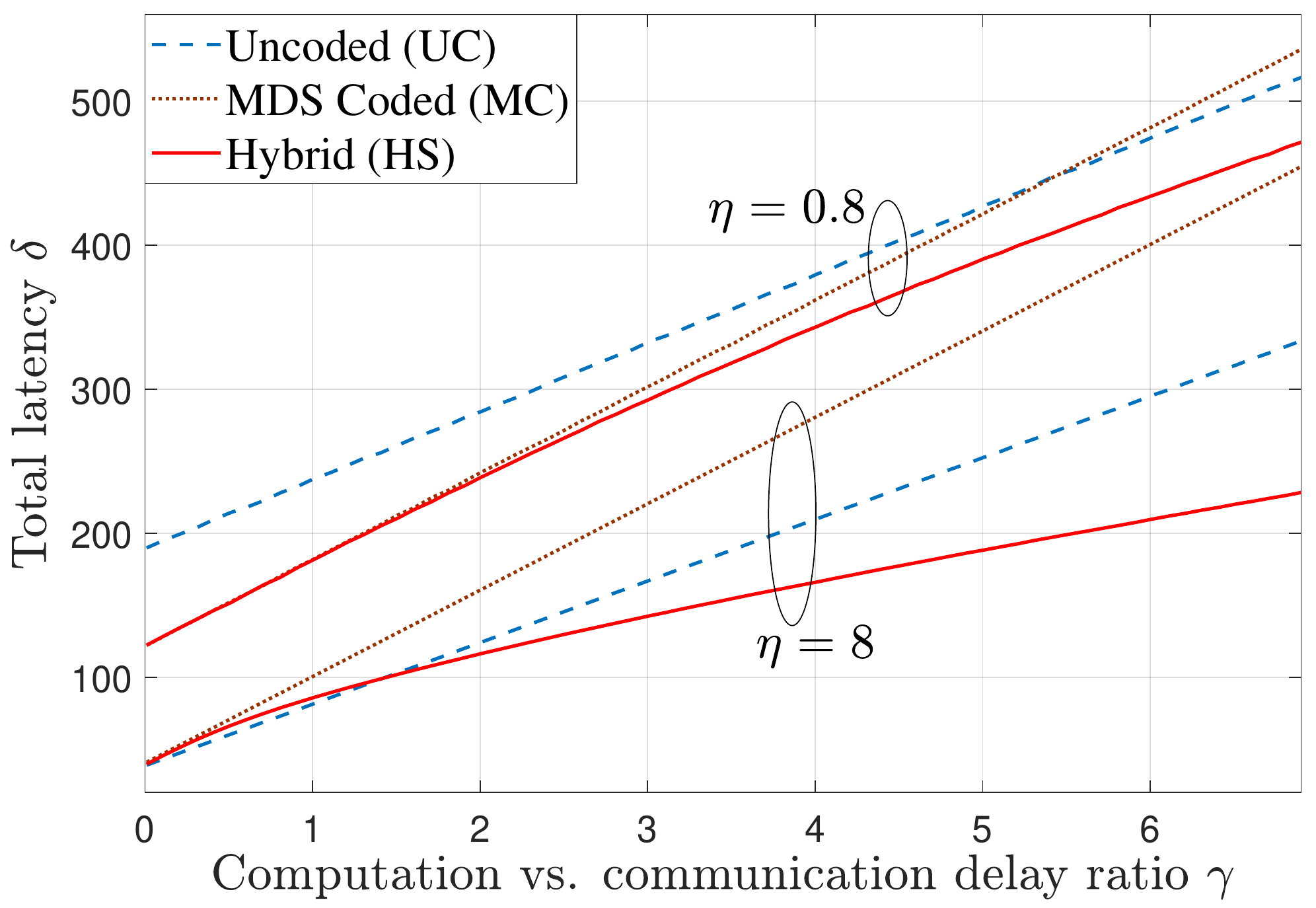}
\vspace{-9pt}
\caption{Latencies of UC, MC and HS versus ratio $r$ for $K=N=6$, $\tau=0.005$, $m=60$, $\mu=0.5$ and different values of $\eta$.}
\label{fig:delta1}
\vspace{-16pt}
\end{figure}

\vspace{-2pt}
\section{Example and Discussion}
In this section, we present a numerical example for a system with $K=N=6$ ENs and users, $m=60$ row vectors in model matrix $\mathbf{W}$, and fractional cache size $\mu=0.5$. We also set the per-IV computation time to $\tau=0.005$ and the average set-up time to different values of $1/\eta$. In Fig.~\ref{fig:delta1}, we plot the overall average latency $\delta$ as a function of the ratio $\gamma$ between normalized computation and communication times.  


As seen in Fig.~\ref{fig:delta1}, as $\gamma$ increases, the total latencies of both UC in \eqref{latencyuc} and MC in \eqref{latencymc} grow linearly, and the relative performance depends on the values of $\gamma$ and $\eta$. When $\eta$ is small, i.e., $\eta=0.8$, the variability in the computing times of the ENs is high, and MDS coding for the most part outperforms the UC scheme due to its robustness to stragglers. This is unless $\gamma$ is large enough, in which downlink transmission latency becomes dominant and the UC scheme can benefit from redundant computations via cooperative EN communication. In contrast, for larger values of $\eta$, the computing times have low variability and MDS coding is uniformly outperformed by the UC scheme.

We also observe that the proposed hybrid coding strategy is effective in trading off computation and communication latencies by controlling the balance between robustness to stragglers and cooperative opportunities via the design of parameters $(q, \rho_1,\rho_2)$. In fact, by increasing $q$ and $\rho_2$, this approach can decrease the communication latency at the cost of a larger computing latency. Apart from very small values of $\gamma$ for large $\eta$, the scheme is seem to outperform both MDS and UC strategies. 

An interesting open problem is to design a hybrid strategy that generalizes both the proposed MDS and UC schemes by properly optimizing the scheduling matrix in a manner akin to UC. Other aspects that are left for future work include the investigation of coding schemes that enable the use of ENs' partial computations \cite{MCJ:18}; of transmission strategies that carry out simultaneous edge computing and downlink communications; of the impact of partial uplink connectivity; and of protocols able to accommodate an arbitrary number of computing tasks.

\bibliographystyle{IEEEtran}
\bibliography{IEEEabrv,final_refs}

\end{document}